\documentclass[12pt]{amsart}
\usepackage[foot]{amsaddr}
\usepackage[a4paper,oneside,DIV12]{typearea}
\usepackage{amssymb}
\usepackage{xspace}
\usepackage{ifthen}
\usepackage{url}
\usepackage{enumerate}                     



\usepackage{soul}
\usepackage{tikz}
\usetikzlibrary{arrows,matrix,decorations,arrows,shapes}
\tikzstyle vertex=[circle, line width=0.2mm, draw, fill=black, inner sep=0mm, minimum width=1mm]%
\tikzstyle{edge} = [draw]

\usepackage[ruled,noline,titlenumbered,linesnumbered,noend,norelsize]{algorithm2e}
\DontPrintSemicolon
\SetNlSty{footnotesize}{}{}
\IncMargin{0.5em}
\SetNlSkip{1em}
\SetKwIF{If}{ElseIf}{Else}{if}{then}{else if}{else}{endif}
\SetKw{Return}{return}
\SetKw{Let}{let}

\newenvironment{algorithm-hbox}{\hbadness=10000\begin{algorithm}}{\end{algorithm}}

\renewcommand{\phi}{\varphi}
\renewcommand{\leq}{\leqslant}
\renewcommand{\geq}{\geqslant}


\usepackage{marginnote}


\newtheorem{theorem}{Theorem}[section]
\newtheorem{prop}[theorem]{Proposition}

\theoremstyle{definition}
\newtheorem{problem}{Problem}
\newtheorem{claim}[theorem]{Claim}

\newcommand{\abs}[1]{\left\vert#1\right\vert}

\newcommand{\Alg}[0]{\textsf{BALANCED}\xspace}
\newcommand{\Br}{Builder\xspace}
\newcommand{\Sr}{Scheduler\xspace}

\DeclareMathOperator{\Val}{val}
\DeclareMathOperator{\bal}{bal}

\newcommand{\eqB}{\Psi}

\newcommand{\vr}[1]{\mathbf{#1}}

\allowdisplaybreaks[1]
\begin{document}


\title{A lazy approach to on-line bipartite matching}

\author[J.\ Kozik]{Jakub Kozik} 
\email{Jakub.Kozik@tcs.uj.edu.pl}%
\thanks{Research of J.\ Kozik was supported by Polish National Science Center UMO-2011/03/D/ST6/01370.}

\author[G.\ Matecki]{Grzegorz Matecki}
\email{Grzegorz.Matecki@tcs.uj.edu.pl}%
\thanks{Research of G.\ Matecki was supported by Polish National Science Center 2011/03/B/ST6/01367.}
\address{
  Theoretical Computer Science\\
  Faculty of Mathematics and
  Computer Science\\
  Jagiellonian University in Krak\'{o}w\\
  Poland}%


\keywords{on-line, bipartite matching, adaptive algorithm}%

\begin{abstract}
We present a new approach, called a lazy matching, to the problem of on-line matching on bipartite graphs.
Imagine that one side of a graph is given and the vertices of the other side are arriving on-line.
Originally, incoming vertex is either irrevocably matched to an another element or stays forever unmatched.
A lazy algorithm is allowed to match a new vertex to a group of elements (possibly empty) and afterwords, forced against next vertices, may give up parts of the group. 
The restriction is that all the time each element is in at most one group.
We present an optimal lazy algorithm (deterministic) and prove that its competitive ratio equals $1-\pi/\cosh(\frac{\sqrt{3}}{2}\pi)\approx 0.588$.
The lazy approach allows us to break the barrier of $1/2$, which is the best  competitive ratio that can be guaranteed by any deterministic algorithm in the classical on-line matching.
\end{abstract}

\maketitle

\section{Introduction}
Many problems of task-server assignment can be modeled as finding a matching in a bipartite graph $G=(U,D,E)$. 
Vertices of one part (set $D$) correspond to servers, and the vertices of the other (set $U$) -- to tasks. 
An edge between a task and a server indicates that the server is capable of performing the task. 
In a simple  setting, when one server can realize at most one task, the problem of maximization of the number of realized tasks reduces to finding a maximum matching.
In real-life applications it is very common that not all tasks are known a priori and some decisions about assignments have to be taken with no knowledge about future tasks. 
A simple model for this situation is on-line bipartite matching. 
In this setting servers are known from the beginning and tasks are revealed one by one. 
The decision about assignment of each task has to be made just after its arrival and cannot be changed in the future.   
Suppose that there are $n$ servers, $n$ tasks are going to be revealed, and capabilities of servers are such that it is possible to realize all the tasks.
(i.e.\ there exists a perfect matching in the tasks-servers graph). 
It is an easy exercise to show that even with these restrictions it is possible to present tasks in such a way, that the constructed assignment is at most $\lceil n/2 \rceil$. 
On the other hand, any greedy assignment strategy guarantees that at least half of the tasks will be assigned. 

In their classical contribution Karp, Vazirani, and Vazirani~\cite{KVV90} take other approach in which the graph to be presented is fixed before the first task is presented. 
In particular the presented graph does not depend on the decisions of the assigning algorithm. 
It does not make any difference for the worst-case analysis of the algorithm, but the approach provides a framework for analyzing randomized ones. 
The authors presented a randomized algorithm which on the average constructs a matching of size at least $(1-1/e)n$. 
They also argued that the result is asymptotically best possible for any randomized algorithm (the original paper~\cite{KVV90} contained a mistake, which has been corrected in~\cite{GM2008}, see also a simplified exposition~\cite{BM2008}). 
The approach of~\cite{KVV90} has been applied to many variants of the original problem, with various practical applications (switch routing problem~\cite{AC2006, AR2003}, on-line auctions~\cite{MGS2011}, Adwords problem~\cite{DH2009, GM2008, MSV2007} etc.) 
Recently a lot of interest is put into a problem of on-line stochastic matching~\cite{BK2010, FMMM2009, KMT2011, MGS2011, MP2012} where a competitive ratio can be  greater than $1-\frac{1}{e}$.
A different approach (called $b$-matching) is presented in~\cite{KalKir2000} where authors allow a server to realize up to $b$ tasks at the same time. 
They showed an optimal deterministic algorithm with  competitive ratio $1-\frac{1}{(1+\frac{1}{b})^b} $ (which tends to $1-\frac{1}{e}$ with $b \rightarrow \infty$).

It is not hard to imagine a situation that the cost of a running server is roughly the same as the cost of an idle server. 
It might be profitable to start to realize some task on many servers simultaneously.
In this work, we allow our algorithm to assign more than one server to an incoming task, and later on gradually giving up its computations and switch some servers to yet another tasks.
Clearly, it is sufficient that at least one server completes the computations of a task.
This enables to postpone the decision about which server is going to accomplish a given task. 
With this settings we present an algorithm matching at least $(1-\pi/\cosh(\frac{\sqrt{3}}{2}\pi))n+o(n)\approx 0.588n+o(n)$ vertices where $n$ is the maximum size of a matching in the whole presented bipartite graph.

The lazy approach was first introduced by Felsner in~\cite{Fel97} 
as an adaptive generalization of the on-line chain partitioning problem. 
His adaptive on-line modification is in our terms a lazy approach to chain partitioning of posets (see also~\cite{Kloch07}).
It is still open and seems challenging to verify if adaptive (lazy) approach to chain partitioning allows more efficient on-line algorithms.

\subsection{Related Work}
The problem of $b$-matching seems similar to lazy matching only the roles of servers and tasks are switched. 
However, in the lazy approach an algorithm always ends up with matching of type one to one, while a $b$-matching algorithm can assign one server to many tasks.

Another similar approach was proposed by Feldman et al.\ ~\cite{FMMMa2009} as \emph{free disposal}.
They consider the weighted matching problem where each incoming vertex $u\in U$ may be assigned to one of its neighbors or left alone.
Each vertex $d\in D$ accepts at most $n(d)$ vertices from $U$ with highest-weighted edges.
Here roles of servers and tasks are switched. 
All tasks are given at once and servers are incoming on-line. 
Each server has to be assigned to at most one task. 
In the end, each task chooses at most $n(d)$ servers from all the servers assigned to it  -- the ones with the highest-weighted edge.
The main difference from the lazy approach is that once a connection between a server and a task is established it cannot be changed till the very end.
There is no such restriction in the lazy approach -- a server may drop its task and take a new one during the on-line process.

The idea of dropping an edge from already constructed matching
is investigated in the \emph{preemptive model}. 
Here, edges with weights are incoming on-line and algorithm is allowed to remove previously accepted edges in order to add a new one.
A collection of results on preemptive matching can be found in~\cite{CTV2015,ELSW2013}.

\subsection{Problem definition}\label{sec:definition}

For a positive integer $\alpha$, the \emph{$\alpha$-lazy matching game} is played in rounds between \Sr and \Br. 
They play on a set of vertices $D$, known in advance.
There is finite number of rounds and in each round:
\begin{enumerate}
	\item \Br presents a vertex $u$ and reveals its neighbors $N(u)\subseteq D$.
	\item \Sr assigns to $u$ a set $m(u)\subseteq N(u)$ of size at most $\alpha$ \label{game:2}\\
	  and updates $m(x)=m(x)\setminus m(u)$ for every vertex $x$ presented before.
\end{enumerate}
Let $U$ be the set of all vertices presented by \Br in the game and let $G=(U,D,E)$ be an underlying bipartite graph with $(u,d)\in E$ when $d\in N(u)$, for all $u\in U, d\in D$.
Now, the \emph{size} $n$ of the game is the maximum size of a matching in $G$.
{If either $\alpha\geq \abs{D}$ for all $n$ or more generally $\alpha{\buildrel n\to\infty\over\longrightarrow}\infty$ then the game is called \emph{$\infty$-lazy matching} or simply \emph{lazy matching}.} We refer to this by writing $\alpha\to\infty$.

The goal of \Sr is to maximize the number of nonempty sets $m(u)$ over all vertices $u\in U$. 
Intuitively, such vertex $u$ is successfully matched with arbitrary chosen $d\in m(u)$. 
We  refer to this number as \emph{the size of the matching constructed by Scheduler}.
The goal of \Br is just the opposite, he disturbs \Sr as much as he can.

The interpretation of $\alpha$-lazy game into servers-tasks assignment is clear: $D$ is the set of servers, $U$ is the set of incoming tasks
and algorithm assigns servers in $m(u)$ to an incoming $u$ (possibly canceling previous computations on servers in $m(u)$).
The quality of Scheduler is measured by the number (or the fraction) of tasks which are being realized at the end of the game. 
Note that for $\alpha=1$ the game reduces to the classical on-line bipartite matching.

Let $\mathcal{A}$ be an algorithm which assigns incoming tasks. 
We denote by $\Val_{\mathcal{A}}(n)$ the worst case value of the matching constructed by $\mathcal{A}$ in all possible games of size $n$. 
The value of the $\alpha$-lazy matching problem $\Val_\alpha(n)$ is the maximum value of $\Val_{\mathcal{A}}(n)$ among all $\alpha$-assigning algorithms $\mathcal A$. 
Since no algorithm produces matching larger that $n$ we additionally use a \emph{competitive ratio} defined as
$\liminf_{n\to\infty}{\Val_\alpha(n)}/{n}$.

\subsection{Main results}
To solve the problem of $\alpha$-lazy on-line matching we consider a deterministic algorithm, called $\alpha$-\Alg{}.
The algorithm is described in Section~\ref{sect:best_alg}. 
It behaves greedy, i.e., no task is rejected if there is a possibility to run it, and tries to locally balance the sizes of all $m(u)$.
We prove $\alpha$-\Alg{} algorithm is the best possible one.
\begin{theorem}\label{thm:optimal-bal}
  $\alpha$-\Alg{} is an optimal strategy for \Sr in the $\alpha$-lazy matching game.
\end{theorem}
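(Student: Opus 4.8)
The statement asserts that \Alg{} attains the game value $\Val_\alpha(n)$. Since \Alg{} is itself an $\alpha$-assigning algorithm, the inequality $\Val_{\Alg}(n)\le \Val_\alpha(n)$ holds by definition, so the entire content is the reverse bound $\Val_{\Alg}(n)\ge\Val_\alpha(n)$. The plan is to prove this by sandwiching both quantities between a common recurrence value $V(n)$: a potential argument lower-bounding what \Alg{} guarantees, and a matching \Br strategy upper-bounding what any \Sr can guarantee.

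First I would pass from concrete games to a state abstraction. After any round the only data that affects the future is, up to relabelling the servers, the multiset of group sizes $\{\abs{m(u)}\}$ together with the number of free servers; which task owns which group is irrelevant, and a task is ultimately matched iff at least one server of its group survives to the end. I would record a \emph{configuration} $c$ accordingly and define its value $V(c)$ as the maximum number of groups that \Sr can guarantee to keep nonempty against an optimal \Br. The point of balancing is a resilience estimate: a block of $s$ servers spread as evenly as possible over $k$ groups forces \Br to delete many more servers before it can empty a group than an unevenly loaded block of the same total size, so balanced configurations dominate unbalanced ones in the partial order ``guaranteed future value''.

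Second, the lower bound. Using the potential $\eqA$, chosen to measure the future value still secured by the current configuration, I would show by induction on the number of remaining rounds that each \Alg{} move, followed by \Br's most damaging response, decreases $\eqA$ by at most the amount dictated by the governing recurrence; summing along any play yields $\Val_{\Alg}(n)\ge V(n)$. Here the greedy-plus-balance rule is used twice: greediness ensures no task is refused when it could be run, and balance ensures that the servers \Sr must surrender in step~\ref{game:2} are taken from the least harmful places, so that the per-round drop in $\eqA$ is minimized. For the upper bound I would build an adaptive \Br strategy recursively: \Br repeatedly presents a vertex whose neighborhood is exactly the set of servers currently carrying some group, forcing any \Sr either to refuse (losing a task) or to steal servers and thereby shrink existing groups; recursing on the resulting sub-instances drives every \Sr down to the same $V(n)$, certified by the second potential $\eqB$ as a quantity that no \Sr move can push above the recurrence value.

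The main obstacle is making the two bounds meet exactly. I expect the crux to be an \emph{exchange argument} showing that any deviation from perfect balance at some round can be punished by \Br, so that balanced play is not merely good but optimal, together with the bookkeeping needed to align the lower-bound recurrence for $\eqA$ with the adversarial recurrence for $\eqB$ and to carry both uniformly through the finite-$\alpha$ regime and the limit $\alpha\to\infty$. Once $\Val_{\Alg}(n)\ge V(n)\ge \Val_\alpha(n)$ is established, optimality of \Alg{} follows.
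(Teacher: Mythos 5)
Your plan correctly identifies the architecture of the proof --- optimality must come from a lower bound on what \Alg{} guarantees meeting an upper bound on what any \Sr can guarantee, both pinned to a common extremal quantity --- but as written it contains no proof: every load-bearing object (the potential $\eqA$, the ``governing recurrence'', the value $V(n)$) is named but never defined, and identifying that common quantity is precisely where all the work lies. In the paper the role of your $V(n)$ is played by $n-\max(x_0+\cdots+x_k)$ over integer solutions of the explicit system~\eqref{ineq:main-res}; Proposition~\ref{prop:spoiler_bound} shows every solution yields a \Br strategy beating every \Sr down to $n-(x_0+\cdots+x_k)$, and Proposition~\ref{prop:balanced_bound} shows the output of \Alg{} always equals $n-(x_0+\cdots+x_k)$ for \emph{some} solution. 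Neither direction is a round-by-round potential induction: the lower bound for \Alg{} is a global counting argument performed after the game ends (the sets $Z_j$, Claims~\ref{claim:d1}--\ref{claim:main-ineq}), and the upper bound is a phased adversary maintaining the invariant $(\star)$ with exceptional vertices $y_1,\ldots,y_k$ found by an averaging argument. No exchange argument ``punishing deviations from balance'' appears or is needed; optimality follows purely from the two bounds coinciding.

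Two of your specific proposed steps would also fail as stated. First, the configuration abstraction ``multiset of group sizes plus number of free servers'' does not capture the state of the game: the payoff is measured against $n$, the maximum matching size of the \emph{final} graph, which \Br controls adaptively and which depends on the identities of the servers used so far (e.g.\ whether the servers remaining in the groups still admit a perfect matching with future tasks), not just on cardinalities. Second, your sketched \Br strategy --- present a vertex adjacent exactly to the currently occupied servers --- is not the paper's strategy and does not obviously force the bound; the paper's \Br must instead carefully choose which servers to \emph{retire} from future neighborhoods (the sets $D_i$, drawn from $X\cup m(y_i)$) so that at most $i$ exceptional tasks retain servers outside $S$, and proving such a choice exists requires the averaging inequality derived from~\eqref{ineq:main-res}. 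Without an explicit candidate for $V(n)$ and concrete versions of both bounds, the sandwich $\Val_{\Alg}(n)\geq V(n)\geq \Val_\alpha(n)$ cannot be closed.
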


The next two sections are to prove the theorem.
The proof is split into two parts.
The following schema of system of inequalities is crucial for both arguments:
\begin{align}\label{ineq:main-res}
 \left\{
 \begin{array}{l}
 (1+\alpha)\vr{x_0} \leq n,\\
  (\vr{x_0}+\ldots+\vr{x_i})(1+\vr{x_i})\leq n-i,\quad i=1,\ldots,k,\\
  \vr{x_1} \geq \vr{x_2}\geq \ldots \geq \vr{x_k} \geq 0,\\
  \vr{x_0}+\ldots + \vr{x_k} \geq 0.
 \end{array}
 \right. 
\end{align}

We are going to work with $n$ and $\alpha$ fixed. 
Then, the schema is parametrized by a positive integer $k$.
We say that a pair $(k,x)$ satisfy system~\eqref{ineq:main-res} if $x=(x_0,x_1, \ldots, x_k)$ is an integer vector satisfying instance of the schema for this particular $k$.

In Section~\ref{sect:worst_adapive}, we prove (Proposition~\ref{prop:spoiler_bound}) that for every solution $(k,x)$ of~\eqref{ineq:main-res}, every $\alpha$-lazy algorithm $\mathcal A$ can be cheated by some \Br's strategy in a game of size $n$ so that $\mathcal A$ matches at most $n-(x_0+\ldots+x_k)$ vertices.
On the other hand, in Section~\ref{sect:best_alg}, we show (Proposition~\ref{prop:balanced_bound}) that when $\alpha$-\Alg{} constructs a matching of size $k$ in a game of size $n$, then $k=n-(x_0+\ldots+x_k)$ for some $(k,x)$ satisfying~\eqref{ineq:main-res}.
These two facts ensure $\alpha$-\Alg{} is the optimal strategy for \Sr.

In order to determine the competitive ratio of $\alpha$-\Alg{} we need to maximize the sum $x_0+\ldots+x_k$ over all feasible solutions $(k,x)$ of~\eqref{ineq:main-res}.
In Section~\ref{sect:comp} we present the linear programming formulation of~\eqref{ineq:main-res} which, with the use of the Complementary Slackness Theorem, can be easily solved. Finally, we prove
\begin{theorem}\label{thm:ratio}
 The competitive ratio of $\alpha$-lazy on-line matching problem on bipartite graphs 
 (and the competitive ratio of 
 $\alpha$-\Alg{} algorithm)  equals
$1 - \frac{\alpha}{1+\alpha}\prod_{i=1}^{\alpha-1}\frac{i+i^2}{1+i+i^2}$. For $\alpha\to\infty$ it converges to $1 - {\pi}/{\cosh \frac{\sqrt 3 \pi}{2}}\approx 0.588$.
\end{theorem}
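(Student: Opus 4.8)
The plan is to reduce Theorem~\ref{thm:ratio} to a purely arithmetic optimization. By Proposition~\ref{prop:spoiler_bound} and Proposition~\ref{prop:balanced_bound} we have $\Val_\alpha(n) = n - M(n)$, where $M(n) = \max\{x_0 + \cdots + x_k : (k,x) \text{ satisfies } \eqref{ineq:main-res}\}$, so the competitive ratio equals $1 - \lim_{n\to\infty} M(n)/n$ and it suffices to show $\lim_{n\to\infty} M(n)/n = \frac{\alpha}{1+\alpha}\prod_{i=1}^{\alpha-1}\frac{i+i^2}{1+i+i^2}$. First I would record the scaling. The constraint $(1+\alpha)x_0 \le n$ forces $x_0 = O(n)$, while reading the second family as $x_i \le \frac{n-i}{x_0+\cdots+x_i}-1$ forces every $x_i$ with $i\ge 1$ to be $O(1)$ (in fact at most $\alpha$). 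Hence an optimal solution is one large entry $x_0$ followed by a non-increasing sequence of bounded nonnegative integers $x_1 \ge \cdots \ge x_k$; by integrality this sequence is a staircase that stays at each integer level $c \in \{\alpha-1,\dots,1,0\}$ for a whole block of consecutive indices.

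The core computation is to locate the optimal staircase and evaluate its sum. Writing $S_i = x_0 + \cdots + x_i$, a block at level $c$ advances $S$ by $c$ and the index by $1$ at each step, so the feasibility quantity $S_i(1+c)+i$ increases by exactly $c^2+c+1$ per step and the block must stop as soon as it would exceed $n$. Matching the boundary values of two consecutive blocks (the tight constraint $S_i(2+c)+i \approx n$ ending the level-$(c+1)$ block reads as $S_i(1+c)+i$ at the start of the level-$c$ block) yields, in the limit $n\to\infty$, the clean recursion
\begin{equation*}
  S^{\mathrm{start}}_{c-1} = S^{\mathrm{start}}_{c}\cdot\frac{(c+1)^2}{c^2+c+1},
\end{equation*}
where $S^{\mathrm{start}}_{c}$ is the value of $S$ at the beginning of the level-$c$ block. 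The first constraint $(1+\alpha)x_0 \le n$ is precisely what makes this relation hold already at the top block $c=\alpha-1$, with $S^{\mathrm{start}}_{\alpha-1} = x_0 = n/(1+\alpha)$. The quantity I want, $M(n) = S^{\mathrm{start}}_0$, follows by unrolling, and a short manipulation (using $\prod_{i=1}^{\alpha-1}(i+1) = \alpha!$) rewrites $\frac{1}{1+\alpha}\prod_{c=1}^{\alpha-1}\frac{(c+1)^2}{c^2+c+1}$ in the stated form $\frac{\alpha}{1+\alpha}\prod_{i=1}^{\alpha-1}\frac{i+i^2}{1+i+i^2}$.

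I expect the main obstacle to be the rigorous proof that this greedy staircase is genuinely optimal rather than merely feasible. Here I would follow the linear-programming route advertised in Section~\ref{sect:comp}: after the scaling above the problem becomes (asymptotically) a linear program in the block lengths $\ell_1,\dots,\ell_{\alpha-1}$ together with $x_0$, and I would certify optimality by exhibiting a dual feasible solution whose complementary-slackness conditions are met exactly by the staircase. Checking that the guessed primal and dual are both feasible and complementary is the one genuinely laborious step; everything else is bookkeeping.

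Finally, to obtain the $\alpha\to\infty$ value I would pass to the limit in the closed form. Since $\frac{\alpha}{1+\alpha}\to 1$, it remains to evaluate $\prod_{i=1}^{\infty}\frac{i+i^2}{1+i+i^2} = \prod_{i=1}^\infty\frac{i(i+1)}{(i-\omega)(i-\bar\omega)}$, where $\omega = e^{2\pi i/3}$ is a primitive cube root of unity, so that $i^2+i+1=(i-\omega)(i-\bar\omega)$. Because numerator and denominator have equal exponent sums ($0+1 = (-\omega)+(-\bar\omega) = 1$), the standard Gamma-product identity gives $\prod_{i=1}^\infty\frac{i(i+1)}{(i-\omega)(i-\bar\omega)} = \Gamma(1-\omega)\Gamma(1-\bar\omega) = |\Gamma(\tfrac32 - \tfrac{\sqrt3}{2}i)|^2$. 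Applying $\Gamma(1+z)=z\Gamma(z)$ together with $|\Gamma(\tfrac12+iy)|^2 = \pi/\cosh(\pi y)$ at $y = \tfrac{\sqrt3}{2}$ (and $|\tfrac12-\tfrac{\sqrt3}{2}i|^2 = 1$) collapses this to $\pi/\cosh(\tfrac{\sqrt3}{2}\pi)$, giving the competitive ratio $1-\pi/\cosh(\tfrac{\sqrt3}{2}\pi)\approx 0.588$.
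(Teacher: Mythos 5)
Your proposal follows essentially the same route as the paper: reduce via Propositions~\ref{prop:spoiler_bound} and~\ref{prop:balanced_bound} to maximizing $x_0+\cdots+x_k$ over~\eqref{ineq:main-res}, fix $x_0=\lfloor n/(1+\alpha)\rfloor$, reparametrize by the lengths of the constant blocks of the staircase (the paper's $Y(x)=(y_1,\ldots,y_m)$), and certify optimality by LP duality. Your boundary recursion $S^{\mathrm{start}}_{c-1}=S^{\mathrm{start}}_{c}\cdot\frac{(c+1)^2}{c^2+c+1}$ on the partial sums is an equivalent repackaging of the paper's recursion~\eqref{solution:yi} on the block lengths, and it does yield the stated closed form (both equal $\frac{(\alpha!)^2}{\prod_{i=1}^{\alpha-1}(1+i+i^2)}\cdot\frac{n}{1+\alpha}$). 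Your Gamma-function evaluation of $\prod_{i\geq 1}\frac{i+i^2}{1+i+i^2}=\pi/\cosh\frac{\sqrt3\pi}{2}$ is correct and supplies more detail than the paper does for that step. The deferred complementary-slackness verification also goes through exactly as you predict: the two tridiagonal systems of tight constraints have explicit nonnegative solutions.

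Two points in your sketch are genuine gaps rather than bookkeeping. First, the reduction to ``one large $x_0$ followed by bounded $x_i$'' is not automatic: system~\eqref{ineq:main-res} allows $x_0$ to be small or even negative, in which case your bound $x_i\leq\alpha$ fails; the paper needs the explicit exchange argument of Proposition~\ref{prop:positiv:x0} to show the optimum is attained at $x_0=\lfloor n/(1+\alpha)\rfloor$, and you should include that (or an equivalent) argument rather than infer it from the constraint $(1+\alpha)x_0\leq n$ alone. Second, you never address integrality: the complementary-slackness optimum has non-integer block lengths (the paper's $y_i$ in~\eqref{solution:yi}--\eqref{solution:ym} are rational), so the LP value is only an upper bound on the integer optimum. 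The paper closes this gap by rounding the $y_i$ down, checking that feasibility of $\eqB_{n,m}(x_0)$ is preserved, and bounding the loss by $\sum_{i=1}^{m}(i-1)=o(n)$ after truncating $m$ at $\ln n$; without some such step your argument determines the ratio of the LP relaxation, not of the game. Both gaps are fixable by the paper's own arguments, but they must be filled for the proof to be complete.
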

The ratio converges fast and for $\alpha\in\{2,3\}$ it is almost optimal, i.e., $5/9\approx 0.556$ and $4/7\approx 0.571$, respectively.

\section{Worst case scenario for a lazy algorithm}
\label{sect:worst_adapive}
Inequalities~\eqref{ineq:main-res} allow $x_0$ to be negative. 
We start with an observation
that in order to maximize the sum $x_0+\ldots+x_k$ it suffices 
to consider solutions of~\eqref{ineq:main-res} with $x_0=\left\lfloor\frac{n}{1+\alpha}\right\rfloor$.

\begin{prop}\label{prop:positiv:x0}
For any pair $(k,x)$ 
satisfying~\eqref{ineq:main-res} there exists a pair $(k',x')$ 
satisfying~\eqref{ineq:main-res} such that $x'_0 = \left\lfloor\frac{n}{1+\alpha}\right\rfloor$ and $x'_0+\ldots+x'_{k'} \geq x_0+\ldots +x_{k}$.
\end{prop}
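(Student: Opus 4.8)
The plan is to observe first that $M:=\left\lfloor\frac{n}{1+\alpha}\right\rfloor$ is the largest value the first inequality permits for $x_0$: from $(1+\alpha)x_0\leq n$ and integrality we get $x_0\leq M$. It then suffices to show that any feasible $(k,x)$ with $x_0<M$ can be replaced by a feasible pair with \emph{strictly larger} $x_0$ and total sum at least $x_0+\ldots+x_k$; since $x_0$ is bounded above by $M$, iterating this finitely many times lands on $x_0'=M$ while never decreasing the sum. Throughout I would write $S_i=x_0+\ldots+x_i$ for the partial sums, so that the second family of constraints reads $S_i(1+x_i)\leq n-i$ and the remaining conditions say $S_0\leq S_1\leq\ldots\leq S_k$ with $x_1\geq\ldots\geq x_k\geq 0$ and $S_k\geq 0$.

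The single step I propose is local. Assume $x_0<M$ and set $v=x_1$. If $v=0$ all tail entries vanish and the sum is just $x_0\leq M$, so I simply replace $(k,x)$ by the pair $(k',x')=(0,(M))$, which is feasible and has $x_0'=M$ and sum $M\geq x_0$. Otherwise $v\geq 1$; let $m$ be the length of the top block, i.e.\ the largest index with $x_1=\ldots=x_m=v$, so that either $m=k$ or $x_{m+1}\leq v-1$. I then move $x_0\mapsto x_0+1$ and $x_m\mapsto v-1$, fixing every other coordinate. This preserves the total sum; it preserves monotonicity, since the modified entry $v-1$ still dominates $x_{m+1}$ and stays nonnegative (here $v\geq 1$ is used); and the first inequality survives because $(1+\alpha)(x_0+1)\leq(1+\alpha)M\leq n$.

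The crux is the second family, and here the partial sums split cleanly: $S_i$ increases by $1$ for $i<m$ and is unchanged for $i\geq m$. For $i>m$ nothing changes. For $i=m$ the factor $1+x_m$ drops from $1+v$ to $v$, so I must control the sign of $S_m$, and this is where I would use the preliminary fact that every feasible solution has $k\leq n$: the last constraint $S_k(1+x_k)\leq n-k$ together with $S_k\geq 0$ and $1+x_k\geq 1$ forces $n-k\geq 0$. Hence $n-m\geq 0$, so the new constraint $S_m\cdot v\leq n-m$ holds trivially when $S_m<0$ and follows from $S_m v\leq S_m(1+v)\leq n-m$ when $S_m\geq 0$. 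For the interior indices $1\leq i<m$ the subtle point is that their partial sums grow; but $S_m-S_i=(m-i)v\geq v\geq 1$ gives $S_i+1\leq S_m$, whence $(S_i+1)(1+v)\leq S_m(1+v)\leq n-m\leq n-i$, exactly what is required.

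The main obstacle is precisely this interaction between raising $x_0$ and the monotonicity constraint. The naive move $x_0\mapsto x_0+1$, $x_1\mapsto x_1-1$ keeps all later partial sums fixed but destroys the ordering whenever the top value is repeated, while decreasing the \emph{last} block entry restores the ordering at the cost of inflating the interior partial sums. Reconciling the two is what dictates the choice of $m$ and the estimate $S_i+1\leq S_m$ above, and it is also why the bound $k\leq n$ must be isolated first so that $n-m\geq 0$. Once the single step is justified, the Proposition follows by iterating it until $x_0=M$.
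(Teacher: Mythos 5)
Your proof is correct and follows essentially the same route as the paper's: the same local move (raise $x_0$ by one, lower the last entry of the top block of equal values by one), the same key estimate $(S_i+1)(1+v)\leq S_m(1+v)\leq n-m\leq n-i$ for the interior indices, iterated until $x_0$ reaches $\left\lfloor\frac{n}{1+\alpha}\right\rfloor$; you even treat the sign of $S_m$ at $i=m$ more carefully than the paper does. The only cosmetic slip is the base case $(k',x')=(0,(M))$ --- the schema is parametrized by a \emph{positive} integer $k$, so take $k'=1$ and $x'_1=0$ there, as the paper does.
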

\begin{proof}
The case when $x_1=0$ is easy: 
just put $x'_0 = \left\lfloor\frac{n}{1+\alpha}\right\rfloor$, $k'=1$ and $x'_1=0$.
We assume that $x_1>0$. 
The claim is proved by induction on $\left\lfloor\frac{n}{1+\alpha}\right\rfloor - x_0$.
The base, when $x_0= \left\lfloor\frac{n}{1+\alpha}\right\rfloor$, is obvious.
For the induction step, let us assume that $x_0<\left\lfloor\frac{n}{1+\alpha}\right\rfloor$.
Let $j$ be the greatest index for which $x_j=x_1$.
Consider the following sequence $(x'_0,\ldots,x'_k)$: $x'_0 = x_0+1$, $x'_j = x_j-1$ and $x'_i = x_i$ for all $i\notin\{0,j\}$.
For $i\geq j$ we have $x'_0+\ldots+x'_i = x_0+\ldots+x_i$.
Therefore 
$$
  (x'_0+\ldots+x'_i)(1+x'_i)\leq (x_0+\ldots+x_i)(1+x_i) \leq n-i.
$$
For  $0<i<j$ we get
\begin{align*}
 (x'_0+\ldots+x'_i)(1+x'_i) &=  (x_0+\ldots+x_{i}+1)(1+x_j) \leq\\
			     &\leq (x_0+\ldots+x_{j-1}+x_j)(1+x_j)\leq \\
			     &\leq n-j \leq n-i.
\end{align*}
Since $x'_0\leq \left\lfloor\frac{n}{1+\alpha}\right\rfloor$ sequence $(k,(x'_0,\ldots,x'_k))$ satisfies~\eqref{ineq:main-res}.
Finally, $\left\lfloor\frac{n}{1+\alpha}\right\rfloor - x'_0 <  \left\lfloor\frac{n}{1+\alpha}\right\rfloor - x_0$, so by the induction hypothesis 
there exists a solution of~\eqref{ineq:main-res} satisfying the claim.
\end{proof}

\begin{prop}\label{prop:spoiler_bound}
For any pair $(k,x)$ 
satisfying~\eqref{ineq:main-res} there exists a strategy for \Br in the $\alpha$-lazy matching game of size $n$ such that any \Sr constructs a matching of size at most $n-(x_0+\ldots+x_k)$.
\end{prop}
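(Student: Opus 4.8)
The plan is to exhibit, for a fixed solution $(k,x)$, an explicit adaptive strategy for \Br and to track one invariant that converts the arithmetic of \eqref{ineq:main-res} into guarantees about the game position. By Proposition~\ref{prop:positiv:x0} I may assume $x_0=\lfloor n/(1+\alpha)\rfloor$: raising $x_0$ to this value only increases the target sum, so a strategy forcing at most $n-(x'_0+\ldots+x'_{k'})$ matches a fortiori forces at most $n-(x_0+\ldots+x_k)$. This normalization pins down the number of servers consumed in the opening phase and gives the natural starting budget.

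\Br plays in phases $i=0,1,\ldots,k$. Throughout, call a server \emph{committed} if it currently lies in some $m(u)$; I would make all neighbourhoods inside a phase common sets, so that by relabelling servers \Br may assume the committed servers form an initial segment and the only information he reacts to is \emph{how many} servers \Sr holds and on how many surviving vertices. In phase $0$ \Br presents a batch on a fresh set of $(1+\alpha)x_0$ servers; whatever \Sr does, only boundedly many survive, producing the first $x_0$ failures while leaving a large matching available — this is the $\alpha\to\infty$ analogue of the classical pairing that spends $1+\alpha$ servers per forced failure, and it is where the constraint $(1+\alpha)x_0\le n$ is used. In phase $i\ge 1$ \Br mounts an \emph{attack}: he presents a small batch whose neighbourhoods are drawn from the currently committed servers, so \Sr must either leave the newcomers unmatched or withdraw servers from earlier survivors; either response is charged as a fresh failure, and the batch sizes are chosen so that phase $i$ accrues exactly $x_i$ new failures.

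The invariant I would carry after phase $i$ is: (a) at least $S_i:=x_0+\cdots+x_i$ presented vertices have empty assignment, and (b) there are $n-i$ servers still in play, distributed so that the surviving active vertices remain loaded to depth $1+x_i$ — which is exactly the content of $S_i(1+x_i)\le n-i$. The monotonicity $x_1\ge\cdots\ge x_k$ records that the reservoir of stealable committed servers can only shrink as attacks proceed, and the ``$-i$'' records that each phase permanently retires one server into the matching witnessing game size $n$. Granting the invariant, constraint $i$ of \eqref{ineq:main-res} is precisely the assertion that on entering phase $i+1$ enough loaded servers remain for the next attack; the induction then runs, after phase $k$ part (a) gives the promised bound $n-S_k$ on \Sr's matching, and $S_k\ge 0$ keeps the bound meaningful.

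Two verifications carry the work. First, the constructed graph must really have a matching of size $n$: I would exhibit it explicitly, matching each phase's retired server and each opening-phase survivor to a distinct presented vertex and checking Hall's condition on the remainder, with the failed vertices arranged to be adjacent only to already-saturated servers so they cost nothing. Second — and this is the main obstacle — the counting must hold against an \emph{arbitrary} lazy \Sr, not merely a balanced one: \Sr may split its $\alpha$ servers unevenly, may decline offered servers, and may choose surprising withdrawals. The crux is to show that no such deviation beats the bound, i.e.\ that $S_i(1+x_i)\le n-i$ is exactly the resource accounting forcing at least $x_i$ collapses per attack while preserving the size-$n$ matching. Making the ``charge one failure per attack response'' robust to every \Sr policy, and proving the loaded-server count (b) is maintained under all distribution choices, is the technical heart of the argument.
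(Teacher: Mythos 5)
There is a genuine gap, and it sits exactly where you place it yourself: the claim that each attack can be ``charged as a fresh failure'' is not true against an arbitrary \Sr. If a newcomer's neighbourhood consists of committed servers, \Sr can answer by withdrawing one server from an earlier vertex that currently holds two or more; the donor stays matched and the newcomer becomes matched, so no failure occurs at all in that round. Likewise your phase $0$ cannot ``produce the first $x_0$ failures'': while uncommitted servers remain in a vertex's neighbourhood, that vertex never fails. Failures cannot be accrued phase by phase; they can only be forced by a global exhaustion argument at the end of the game, and your invariant (b) --- that survivors stay ``loaded to depth $1+x_i$'' --- is not something \Sr is obliged to maintain, since it may distribute its servers as unevenly as it likes. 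So the plan's accounting scheme, not just its execution, is the problem.

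The idea the paper uses instead, and which is absent from your sketch, is an \emph{adaptive choice of which servers to delete from future neighbourhoods}, paid for by sacrificing one vertex per phase. After phase $i$, \Br uses the inequality $(x_0+\ldots+x_i)(1+x_i)\leq n-i$ in an averaging argument over all non-sacrificed vertices presented so far to find a single vertex $y_i$ with $\abs{m(y_i)}+\abs{X}\geq 1+x_i$ (where $X$ is the set of currently unmatched servers), and then removes a set $D_i\subseteq X\cup m(y_i)$ of size $1+x_i$ from all later neighbourhoods. The invariant is not about failure counts but about containment: every presented vertex except the $i$ sacrificed ones has $m(y)\cap(D_0\cup\ldots\cup D_i)=\emptyset$. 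At the end every matched vertex either is one of the $k$ sacrificed vertices or has its whole assignment inside the residual set $S$ of size $n-k-(x_0+\ldots+x_k)$, and since the sets $m(u)$ are pairwise disjoint this bounds the matching by $n-(x_0+\ldots+x_k)$ in one global step. This also explains the $-i$ term in \eqref{ineq:main-res} (it tracks the growing set of sacrificed vertices, not ``one server retired into the matching per phase''). Without the averaging step and the sacrifice set, the induction you propose does not close.
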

\begin{proof}  
By Proposition~\ref{prop:positiv:x0} it is enough  to consider pairs with $x_0\geq0$.
Without loss of generality we assume that $x_k>0$ and describe a strategy for \Br, that does not allow \Sr to construct a matching larger than $n-(x_0+\ldots+x_k)$.
During the game \Br presents a bipartite graph $G=(U, D,E)$ with $\abs{U}=\abs{D}=n$ and maintains an auxiliary structure:
a partition of $U=  U_0\cup U_1\cup\ldots\cup U_k\cup R$ and a partition of $D =  D_0\cup D_1\cup\ldots\cup D_k\cup S$ such that
\begin{align}
    \abs{U_0} &= \abs{D_0} = x_0,\nonumber\\
    \abs{U_i} &=  \abs{D_i}=1+x_i,\qquad\textrm{for }i=1,\ldots,k,\nonumber\\
    N(u_i)    &=  D-(D_0\cup\ldots\cup D_{i-1}),\qquad\textrm{for each }u_i\in U_i,\label{atack:neighbourU}\\
    N(r)      &=  S,\qquad\textrm{for each }r\in R.\label{atack:neighbourR}
\end{align}
Observe that~\eqref{ineq:main-res} guarantees that $x_0+\ldots+x_k\leq n-k$ and thus 
$\sum_{i=0}^k\abs{U_i} = \sum_{i=0}^k\abs{D_i}\leq x_0 + \sum_{i=1}^k(1+x_i) \leq n$.
Therefore $\abs{R}=\abs{S}\geq0$.
It is straightforward that any bipartite graph which can be partitioned in such way contains a perfect matching.

The strategy of \Br is divided into $k+2$ phases enumerated from $0$ to $k+1$.
{Figure~\ref{fig:attack} depicts the evaluation of the strategy described below for $\alpha=2$, $n=18$ and $k=2$.}
In the beginning of the $i$-th phase ($0\leq i\leq k$) sets $U_j$ and $D_j$, for $j<i$, are already fixed. 
Next, during the $i$-th phase, \Br presents $1+x_i$ vertices, or $x_0$ vertices when $i=0$, which form  set $U_i$ with neighborhoods defined by~\eqref{atack:neighbourU}. 
Most important, \Br chooses in the special way the set $D_i\subseteq D-(D_0\cup\ldots D_{i-1})$ of size $1+x_i$ when $i>0$ and of size $x_0$ when $i=0$.
This will conclude phase $i$.
At the very end, within phase $k+1$, \Br presets a set $R$ of size $n-k-(x_0+\ldots+x_k)$ with vertices neighbouring with all vertices in $S=D-\bigcup_{i=0}^kD_i$.
  
It remains to define \Br's choice of $D_i$.
For that, after each phase $i$ ($0\leq i\leq k$) \Br maintains the following:
\begin{itemize}
    \item[($\star$)] 
    there are $i$ distinct vertices $y_1,\ldots,y_i\in U_0\cup\ldots\cup U_i$ such that
    $m(y)\cap \bigcup_{j=0}^i D_j = \emptyset$ for all $y\in (U_0\cup\ldots\cup U_i)\setminus \{ y_1,\ldots,y_i\}$.
\end{itemize}
Observe that, for a fixed $X\subseteq D$ and a fixed $y\in U$,
once the condition $X\cap m(y) = \emptyset$ is satisfied, it will stay so to the end of the game
as $m(y)$ may only shrink later on in the game.
We prove by induction that choosing such $D_i$ is possible in every phase by successively finding correct $y_i$'s. 

For $i=0$, \Br chooses\footnote{Let $m(X) = \bigcup_{x\in X}m(x)$ for a set $X$.} any $D_0\subseteq D-m(U_0)$ of size $x_0$ and it is possible as 
$\abs{D-m(U_0)}\geq n-\alpha x_0\geq x_0$ (because $\abs{m(y)}\leq\alpha$ for all $y\in U_0$, and by~\eqref{ineq:main-res}).
This way $D_0\cap m(y)=\emptyset$ for all $y\in U_0$ as required.

\begin{figure}[tbh]
\begin{center}
\begin{tikzpicture}
 \foreach \p in {1,...,18} {
  \coordinate (d\p) at (\p*0.7,-1);
  \fill (d\p) circle (1pt) node[below] {\footnotesize $~\atop\p$};
  }
  
 \foreach \q in {1,...,18} {
  \coordinate (u\q) at (\q*0.7,1);
 }
 \fill (u1) circle(1pt) node[above] {\footnotesize $7\atop 8$};
 \fill (u2) circle(1pt) node[above] {\footnotesize $9\atop10$};
 \fill (u3) circle(1pt) node[above] {\footnotesize $\textrm{\st{$11$}}\atop12$};
 \fill (u4) circle(1pt) node[above] {\footnotesize $\textrm{\st{$13$}}\atop14$};
 \fill (u5) circle(1pt) node[above] {\footnotesize $\textrm{\st{$15$}}\atop16$};
 \fill (u6) circle(1pt) node[above] {\footnotesize $\textrm{\st{$17$}}\atop18$};
 \draw[rounded corners] (u1) ++ (-0.3,-0.2) rectangle  (6*0.7+0.3,1.8) ;
 \draw[rounded corners] (d1) ++ (-0.3,-0.6) rectangle  (6*0.7+0.3,0.2-1) ;
 
 \fill (u7) circle(1pt) node[above] {\footnotesize $~\atop17$};
 \fill (u8) circle(1pt) node[above] {\footnotesize $~\atop15$};
 \draw[rounded corners] (u7) ++ (-0.3,-0.2) rectangle  (8*0.7+0.3,1.5) ;
 \draw[rounded corners] (d7) ++ (-0.3,-0.6) rectangle  (8*0.7+0.3,0.2-1) ;
 
 \fill (u9) circle(1pt) node[above] {\footnotesize $~\atop13$};
 \fill (u10) circle(1pt) node[above] {\footnotesize $~\atop11$};
 \draw[rounded corners] (u9) ++ (-0.3,-0.2) rectangle  (10*0.7+0.3,1.5) ;
 \draw[rounded corners] (d9) ++ (-0.3,-0.6) rectangle  (10*0.7+0.3,0.2-1) ;
 
 \foreach \q in {11,...,18} {
   \fill (u\q) circle(1pt);
 }
 \draw[rounded corners] (u11) ++ (-0.3,-0.2) rectangle  (18*0.7+0.3,1.2) ;
 \draw[rounded corners] (d11) ++ (-0.3,-0.6) rectangle  (18*0.7+0.3,0.2-1) ;
 
 \path (u1) ++ (2,1.1) node {$U_0$} ;
 \path (d1) ++ (2,-1) node {$D_0$} ;
 
 \path (u7) ++ (0.4,0.8) node {$U_1$} ;
 \path (d7) ++ (0.4,-1) node {$D_1$} ;
 
 \path (u9) ++ (0.4,0.8) node {$U_2$} ;
 \path (d9) ++ (0.4,-1) node {$D_2$} ;
 
 \path (u11) ++ (2.4,0.5) node {$R$} ;
 \path (d11) ++ (2.4,-1) node {$S$} ;
 
 \draw (u1) ++ (-0.2,-0.2) rectangle (1*0.7+0.2,2) ;
 \path (0.8,2.2) node {$y_1$};
 
 \draw (u2) ++ (-0.2,-0.2) rectangle (2*0.7+0.2,2) ;
 \path (2*0.7+0.1,2.2) node {$y_2$};

 \foreach \q in {u1,u7,u9,u11}  {\path (\q) ++ (-.2,-.21) coordinate (p\q);}
 \foreach \q in {u6,u8,u10,u18} {\path (\q) ++ (.2,-.21) coordinate (p\q);}
 \foreach \q in {d1,d7,d9,d11}  {\path (\q) ++ (-.2,.21) coordinate (p\q);}
 \foreach \q in {d6,d8,d10,d18} {\path (\q) ++ (.2,.21) coordinate (p\q);}

 \filldraw[black!50!white] (pu11) -- (pd11) -- (pd18) -- (pu18) ;
 \filldraw[black!40!white] (pu9) -- (pu10) .. controls (10*.7+.2,.1) .. (pd18) -- (pd9);
 \filldraw[black!30!white] (pu7) -- (pu8) .. controls (8*.7+.2,.1) .. (pd18) -- (pd7);
 \filldraw[black!20!white] (pu1) -- (pu6) .. controls (6*.7+.2,.1) .. (pd18) -- (pd1);

\end{tikzpicture}
\caption{An example of the construction for solution $x=(6,1,1)$ of~\eqref{ineq:main-res} with $n=18$ and $\alpha=2$. 
Recall that $N(U_0)=D$, $N(U_1) = D-D_0$, $N(U_2)=D_2\cup S$ and $N(R)=S$. 
The numbers under vertices in the bottom part represent vertices' names, 
while the numbers in the upper part correspond to sets $m(u)$ (e.g.\ $m(y_1) = \{7,8\}$).
}\label{fig:attack}
\end{center}
\end{figure}
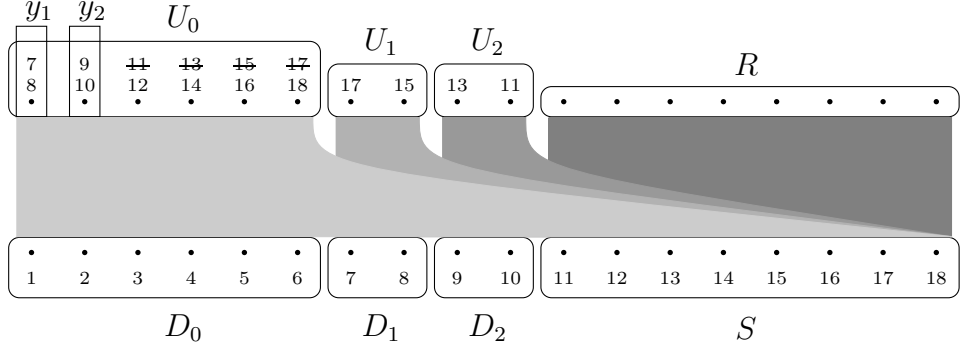

For $1\leq i\leq k$, we assume that ($\star$) holds with vertices $Y=\{y_1,\ldots,y_{i-1}\}$ after $(i-1)$-th phase.
Let $U'=U_0\cup\ldots\cup U_{i}$ and $D'=D-(D_0\cup\ldots\cup D_{i-1})$.
First, note that $m(U'-Y)\subseteq D'$ by the ($\star$)-property after $(i-1)$-th phase.
We split vertices in $D'$ into two blocks:
$D' = m(U'-Y)\cup X$, where $X$ is simply the set of all unmatched vertices in $D'$.
Next, we give a lower bound on the average size of $m(u)$ for $u\in U'-Y$:
\begin{align*}
\frac{\abs{m(U'-Y)}}{\abs{U'-Y}} &=
    \frac{\abs{D'}-\abs{X}}{\abs{U'}-\abs{Y}}  
    =\frac{n-i-\abs{X}-(x_0+\ldots+x_{i-1}-1)}{x_0+\ldots+x_i+1}\\
    &\buildrel \eqref{ineq:main-res}\over \geq 
    \frac{(x_0+\ldots+x_i)(1+x_i)-\abs{X}-(x_0+\ldots+x_{i-1}-1)}{x_0+\ldots+x_i+1}\\
    &= x_i + \frac{1-\abs{X}}{x_0+\ldots+x_i+1} > x_i - \abs{X}.
  \end{align*}
Clearly, there must be a vertex $y_i\in U'-Y$ with $\abs{m(y_i)}\geq \frac{\abs{m(U'-Y)}}{\abs{U'-Y}} > x_i-\abs{X}$.
Since all values are integers we have $\abs{X}+\abs{m(y_i)}\geq 1 + x_i$.
\Br picks $D_i$  to be any subset of $X\cup m(y_i)$ of size $1+x_i$,  to keep the property ($\star$) satisfied after the $i$-th phase.

By the condition ($\star$) after the $k$-th phase there is $Y=\{y_1,\ldots, y_k\}$ such that  $D-S$ and $m(y)$ are disjoint for all $y\in U-Y$.
Therefore, whenever $m(u)\neq\emptyset$,  for $u\in U$, then either $u\in Y$ or $m(u)\subseteq S$. 
It means that the number of such $u$'s is at most $\abs{Y}+\abs{S} = k + n - (\abs{D_0}+\ldots+\abs{D_k}) = n-(x_0+\ldots+x_k)$. 
Consequently, the size of the matching produced by  \Sr is at most $n-(x_0+\ldots+x_k)$.
\end{proof}

\section{The best matching algorithm}
\label{sect:best_alg}

At any moment during a lazy matching game, we say that $d\in D$ is \emph{available} for $u\in U$ if $d\in N(u)$ and $m(x)\neq\{d\}$ for any $x$ presented earlier.
Also, $d$ is \emph{strongly available} for $u$ if it is available for $u$ and $d$ does not belong to any $m(x)$. 
Vertex $e\in U$ is \emph{ready} for $u$ if $m(e)$ contains an  element which is available for $u$.

We present an algorithm for \Sr called $\alpha$-\Alg{}. 
Suppose that vertex $u$ has just been presented and let $U$ be the set of vertices presented so far (including $u$). 
Each set $m(x)$ for $x\in U-\{u\}$ is already known and the algorithm has to construct set $m(u)$. 
The construction is described below -- $m(u)$ is increased, one element at a time, until certain condition is satisfied.
During the process some other sets $m(x)$ may be decreased.

\begin{algorithm}
\label{alg}
\caption{$\alpha-\Alg{}(u)$}
\Let $m(u):=\emptyset$ \;
pick up at most $\alpha$ strongly available elements for $u$ and put it into $m(u)$ \; 
\While {there exists a vertex $e\in U$ that is ready for $u$ and satisfies 
   $$
    \abs{m(u)}  +2 \leq \abs{m(e)}\label{cond:balanced}
   $$ }
 { from the set of all such vertices pick $e$ with maximal size of $m(e)$ \;
   move one vertex available for $u$ from $m(e)$ to $m(u)$}
\end{algorithm}
The condition in line 2 guarantees that the size of $m(u)$ will be at most $\alpha$.
Note that $\alpha$-\Alg{} never leaves $m(u)$ empty if there exists an available element for $u$, so in this respect $\alpha$-\Alg{} can be considered as greedy. 
For $\alpha=1$ the above algorithm is just a simple greedy construction of a bipartite matching.

The following proposition describes the performance of $\alpha$-\Alg{}. 
\begin{prop}\label{prop:balanced_bound}
 The size $k$ of matching produced by $\alpha$-\Alg{} in a lazy matching game of size $n$ equals  $n-(x_0+x_1+\ldots+x_k)$
 for some pair $(k,(x_0,\ldots,x_k))$ satisfying~\eqref{ineq:main-res}.
\end{prop}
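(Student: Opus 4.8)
The plan is to analyze the final state of the matching produced by $\alpha$-\Alg{} and extract an integer vector $x=(x_0,\ldots,x_k)$ that records how the sizes of the sets $m(u)$ are distributed, then verify that this vector satisfies~\eqref{ineq:main-res}. Since the algorithm is greedy and tries to locally balance the sizes $\abs{m(u)}$, I expect the matched vertices to fall into groups according to the value of $\abs{m(u)}$. First I would let $k$ be the number of nonempty sets $m(u)$ at the end of the game (the size of the constructed matching), and I would order these vertices so that their $m$-sizes are nonincreasing. The natural guess is to set $x_i+1 = \abs{m(u_i)}$ for a suitably chosen indexing, and $x_0$ to account for the fully unmatched vertices of $D$ (those lying in no $m(u)$); the defining relation $k = n-(x_0+\ldots+x_k)$ should then read as: the number of servers used is $n$ minus the servers that are ``wasted'' either as empty slack ($x_0$) or as surplus in oversized groups ($x_1,\ldots,x_k$).

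The key structural facts I would establish come from the termination condition of the \textbf{while} loop. When the algorithm finishes processing the last vertex, no vertex $e$ that is ready for $u$ satisfies $\abs{m(u)}+2\le\abs{m(e)}$; more usefully, I would extract a global invariant about the final configuration: if a set $m(e)$ is large, then every vertex $d$ outside $\bigcup m(x)$ that $e$ could have reached must be unavailable, which forces the unmatched servers to be concentrated and bounds how the sizes can spread. The crucial inequalities $(\vr{x_0}+\ldots+\vr{x_i})(1+\vr{x_i})\le n-i$ should come from a counting argument: the sum $\vr{x_0}+\ldots+\vr{x_i}$ counts a block of servers, $1+\vr{x_i}$ is the size of the $i$-th group, and their product must not exceed the available room $n-i$ because the neighborhoods of the relevant vertices overlap a bounded region of $D$. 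I would prove the monotonicity $\vr{x_1}\ge\ldots\ge\vr{x_k}\ge 0$ directly from the balancing rule (the algorithm always steals from the currently largest ready set), and the initial inequality $(1+\alpha)\vr{x_0}\le n$ from the fact that each of the $x_0$ truly-empty servers forced the algorithm to fill some $m(u)$ up to size $\alpha$ (greediness), so $\alpha$ servers are consumed per empty one, plus the empty servers themselves.

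I would carry out the steps in this order: (1) define $k$, the ordering of matched vertices, and the vector $x$ explicitly from the final sizes $\abs{m(u_i)}$ and the count of unused servers; (2) verify $k=n-(x_0+\ldots+x_k)$ as a bookkeeping identity partitioning the $n$ servers; (3) prove the boundary conditions $\vr{x_i}\ge 0$, monotonicity, and $(1+\alpha)\vr{x_0}\le n$ from greediness and the balancing rule; (4) prove the main family of inequalities $(\vr{x_0}+\ldots+\vr{x_i})(1+\vr{x_i})\le n-i$ by the counting/availability argument above, applied at each threshold $i$.

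The main obstacle I anticipate is step (4): getting the product inequality requires identifying, for each $i$, the right set of servers whose size is bounded and arguing that a whole block of $\vr{x_0}+\ldots+\vr{x_i}$ servers each carries load at least $1+\vr{x_i}$ in a region of $D$ of size at most $n-i$. This demands a careful invariant maintained throughout the run of the algorithm — essentially that once the sizes stabilize into their final groups, the servers feeding the large groups were never strongly available to the small groups, so they cannot be double-counted. Establishing that this availability structure is preserved under the \textbf{while} loop's reassignments (which move servers between sets) is where the balancing condition $\abs{m(u)}+2\le\abs{m(e)}$ does the real work, and matching the ``$-i$'' slack precisely to the $i$ vertices that are left oversized is the delicate part of the accounting.
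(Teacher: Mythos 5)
Your overall framing matches the paper's: order the matched vertices by nonincreasing $\abs{m(\cdot)}$, set $x_i+1=\abs{m(y_i)}$, read off $k=n-(x_0+\ldots+x_k)$ as a partition of the used servers, get $(1+\alpha)x_0\leq n$ from greediness, and get the product inequalities from a counting argument. But two things are genuinely missing or wrong. First, your definition of $x_0$ as ``the count of fully unmatched servers'' is only correct when $\abs{D}=n$. The game size $n$ is the maximum matching size of $G$, and $\abs{D}$ may exceed it; servers outside every maximum matching sit unused through no fault of the algorithm. The paper fixes an optimal matching, lets $F\subseteq D$ be the servers it leaves out, and sets $x_0=\abs{X-F}-\abs{F-X}$ (which may be negative --- note that~\eqref{ineq:main-res} deliberately permits $x_0<0$). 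This correction is not cosmetic: it is exactly what makes the bookkeeping identity $\sum_{i\geq 1}(1+x_i)=\abs{D-X}=n-x_0$ come out right, and the optimal matching is also used to lower-bound the number of ``relevant'' $U$-vertices in the counting step.

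Second, and more seriously, step (4) --- which you correctly identify as the crux --- is not an argument but a description of what an argument would need to do. The paper's actual mechanism is a lemma you do not have: if an element migrates from $m(x)$ to $m(y)$, then $\abs{m^{t_1+1}(x)}\geq\abs{m^{t_2}(y)}$, i.e.\ sets only ever receive elements from sets that are (and remain) at least as large. The product inequality is then \emph{not} obtained from an invariant maintained forward in time, but by a reverse-chronological accounting: enumerate the relevant vertices $q_1,\ldots,q_s$ in reverse arrival order, define disjoint sets $Z_j=m^{t_j}(q_j)\setminus(Z_1\cup\ldots\cup Z_{j-1})\subseteq D-X$, and show $\abs{Z_j}\geq\mu-1$ for every $j$ by combining the migration lemma with the negation of the while-loop condition at the moment $q_j$ was processed (plus an induction to handle elements of $m^{t_j}(q_j)$ already claimed by later-ordered vertices). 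Summing $\abs{Z_j}$ against $\abs{D-X}=n-x_0$ yields $(\abs{Q}-\abs{Y})(\mu-1)+\abs{M}\leq n-x_0$, from which $(x_0+\ldots+x_i)(1+x_i)\leq n-i$ follows. Without the migration lemma and the reverse-order disjointification, your claim that ``servers feeding the large groups were never strongly available to the small groups'' does not obviously survive the while-loop's reassignments, and I see no way to close the argument as you have sketched it.
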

\begin{proof}
Consider an instance of the lazy matching game of size $n$ in which \Br produced graph $G=(U,D,E)$, and algorithm $\alpha$-\Alg{} constructed matching $m:U\to\mathcal{P}(D)$. 
Suppose that $N$ rounds have been played in the game.
\emph{Presenting time} of element $u\in U$ is the index of the round in which $u$ has been presented. 

We denote by $m^t:U\to\mathcal{P}(D)$, the (partial) matching constructed up to round $t$. 
In particular, for $u \in U$ that is presented in round $t_0$, we have $m^t(u)=\emptyset$ for $t< t_0$ and then $(m^{t_0}(u), m^{t_0+1}(u), \ldots,m^{N}(u))$ is a weakly decreasing sequence of sets with $m^{N}(u)=m(u)$. 

Let $X$ be the set of all vertices in $D$ such that $m(u)\cap X = \emptyset$ for each $u\in U$. 
The size of the matching produced be $\alpha$-\Alg{} is equal to the size of $\mathbf Y = \{u \in U : m(u)\neq \emptyset\}$.
For the proof of the proposition we need the following claims. 
\begin{claim}
\label{claim:d1}
  Suppose that  $x\neq y$, $m^{t_1}(x) \cap m^{t_2}(y)\neq\emptyset$ and $t_1 < t_2$, then $\abs{m^{t_1+1}(x)} \geq\abs{m^{t_2}(y)}$. 
\end{claim}
\begin{proof}
It is sufficient to verify the claim for $t_2=t_1+1$. 
It means that 
during round $t_2$ algorithm $\alpha$-\Alg{} removed one element from $m(x)$ and inserted it into $m(y)$. 
Let $d\in m^{t_1}(x)\cap m^{t_2}(y)$ be the last such element (see Figure~\ref{fig:common-element}).
This happens only when the condition from the line 3 of the algorithm  is satisfied and  $x$ is a vertex with maximum size of assigned set among vertices ready for $y$. 
Let $s$ be the size of the set assigned to $x$ at that moment  (in terms of listing~\ref{alg} it is $\abs{m(x)}$). 
Clearly $\abs{m^{t_1}(x)} \geq s$, since $\abs{m^{t_1}(x)}$ is the size of the set assigned to $x$ in the beginning of round $t_2$, and that set can only get smaller during the round. 
Also $s-1=\abs{m^{t_1+1}(x)}$ since element $d$ was the last one removed from $m(x)$.

The condition from the line 3 of $\alpha$-\Alg{} guarantees that the set that has just been increased has no more elements than the one that has been decreased. 
That property, and the fact that no vertex that was ready for $y$ had assigned set greater than $s$, gives $s > \abs{m^{t_2}(y)}$. The claim follows.
\end{proof}

\usetikzlibrary{arrows,calc,shapes,decorations.pathreplacing}
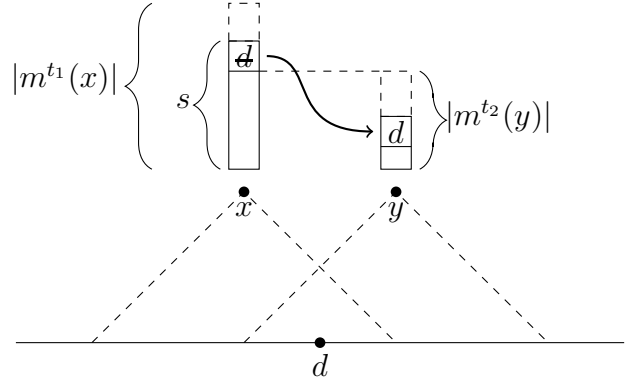
\begin{figure}
\begin{center}
\begin{tikzpicture}

\coordinate (d) at (5,0);
\coordinate (x) at (4,2);
\coordinate (y) at (6,2);

\fill (d) circle (2pt) node[below] {$d$};
\fill (x) circle (2pt) node[below] {$x$};
\fill (y) circle (2pt) node[below] {$y$};

\path[draw] (1,0) -- (9,0);
\path[draw,dashed] (2,0) -- (x) -- (6,0);
\path[draw,dashed] (4,0) -- (y) -- (8,0);

\draw (3.8,2.3) rectangle (4.2,4);
\draw[dashed] (3.8,4) rectangle (4.2,4.5);
\path (4,3.8) node {\st{$d$}};
\path[draw] (3.8,3.6) -- (4.2,3.6);

\draw (5.8,2.3) rectangle (6.2,3);
\draw[dashed] (5.8,3) rectangle (6.2,3.6);
\path[draw] (5.8,2.6) -- (6.2,2.6);
\path (6,2.8) node {$d$};

\draw[thick,->] (4.3,3.8) .. controls (5,3.8) and (4.5,2.8) .. (5.7,2.8);

\draw[decorate,decoration={brace,mirror,raise=6pt,amplitude=10pt}]    (3,4.5)--(3,2.3) ;
\draw[decorate,decoration={brace,mirror,raise=6pt,amplitude=10pt}]    (3.9,4)--(3.9,2.3);
\node at (3.2,3.2) {$s$};
\node[left] at (2.5,3.5) {$\abs{m^{t_1}(x)}$};

\draw[decorate,decoration={brace,raise=6pt,amplitude=10pt}]    (6.1,3.6)--(6.1,2.3);
\node[right] at (6.5,3) {$\abs{m^{t_2}(y)}$};

\path[draw,dashed] (4.2,3.6) -- (5.8,3.6);

\end{tikzpicture}
\caption{The move of element $d$ from $m(x)$ into $m(y)$.}
\label{fig:common-element}
\end{center}
\end{figure}

\begin{claim}\label{claim:els-above-X}
 If $t$ is the presenting time of $u$ and $N(u)\cap X \neq \emptyset$, then $\abs{m^t(u)}=\alpha$.
\end{claim}
\begin{proof}
 Let $d\in N(u)\cap X$. By the definition of $X$ element $d$ is strongly available for $u$ when $u$ is presented. 
But $d$ is not chosen in the line $2$ of $\alpha$-\Alg{}. 
It means there were at least $\alpha$ other strongly available elements for $u$ which were added to $m^t(u)$. 
Thus, indeed $\abs{m^t(u)}=\alpha$.
\end{proof}

\begin{claim}\label{claim:main-ineq}
For any subset $Y\subseteq \mathbf Y$  we have
$$(\abs{Q}-\abs{Y})(\mu - 1) + \abs{M}  \leq \abs{D-X},$$
where $\mu= \min\{\abs{m(y)} : y \in Y\}, M=\bigcup_{y\in Y} m(y)$ and $Q$ is the set of all vertices $q\in U$ for which $N(q) \cap (M\cup X) \neq \emptyset$. 
\end{claim}
\begin{proof}
The claim is obvious for $\mu=1$ since $M\cap X = \emptyset$.
We assume that $\mu>1$.
Let $i=\abs{Y}, s=\abs{Q}$ and let $(q_1, \ldots, q_s)$ be the enumeration of $Q$ for which the sequence of corresponding presenting times $(t_1, \ldots, t_s)$ is strictly decreasing.
It means that $q_1,\ldots,q_s$ are in the reverse of the arrival order.
For each $q_j$ we recursively define  set $$Z_j:=m^{t_j}(q_j)-(Z_1\cup\ldots\cup Z_{j-1})\subseteq D-X.$$ 
Observe that $\abs{Z_1}+\ldots+\abs{Z_s}\leq \abs{D-X}$ and to finish the proof it suffices to show that $\abs{Z_j}\geq \mu - 1$, for all $j$ ($1\leq j\leq s$), and $Z_j \supseteq m(q_j)$ for $q_j\in Y$. We check two possibilities, either $q_j\in Y$ or $q_j\notin Y$.
The first one is straightforward.
Set $m^t(q_j)$ can only get smaller, after  vertex $q_j$ has been presented, therefore we have $m(q_j)\subseteq m^t(q_j)$ for all $t\geq t_j$. 
In particular, for every $j'\leq j-1$, we have $t_{j'} \geq t_j$, hence $m(q_j) \cap Z_{j'} = \emptyset$. That gives $Z_j \supseteq m(q_j)$.

Suppose now, that $q_j\notin Y$ and let $m^{t_j}(q_j)=Z_j\cup R_j$, where $R_j\subseteq Z_1\cup\ldots\cup Z_{j-1}$. 
Assume also that $\abs{Z_j}<\alpha$ since otherwise $\abs{Z_j}=\alpha>\mu-1$.
We consider two cases:

\textbf{Case 1: $R_j=\emptyset$.}
By Claim~\ref{claim:els-above-X} and the definition of $q_j\in Q$, inequality $\abs{Z_j}<\alpha$ implies that $N(q_j)\cap M\neq\emptyset$.
It  means that some element $d\in M$ was available for $q_j$ at the time when $q_j$ was presented. 
Element $d$ must belong to $m^{t_j}(u)$ for some $u\in U$ presented earlier 
(otherwise the algorithm would put it in $m^{t_j}(q_j)$, but since after the game we have  $d\in M$, it would  imply that $d\in R_j$). 
Thus, there are two elements (possibly the same) $u\in Q$ and $y\in Y$ such that $d\in m^{t_j}(u) \cap m(y)$ and obviously $y$ is presented after $u$.
While $u$ is presented before $q_j$ element $y$ may be presented after $q_j$.
Regardless of whether or not $u$ and $y$ are the same
by Claim~\ref{claim:d1} we deduce $\abs{m^{t_j}(u)} \geq \abs{m(y)} \geq \mu $. 
On the other hand the algorithm in round $t_j$  did not choose element $d$ to be assigned to $q_j$, which means that at the end of the round the inequality in the line 3 of the algorithm was not satisfied. 
That means that $\abs{Z_j}=\abs{m^{t_j}(q_j)} \geq \abs{m^{t_j}(u)}-1 \geq \mu -1$.

\textbf{Case 2: $R_j \neq \emptyset$.} 
Let $t>t_j$ be the smallest number (the first moment) for which $R_j \cap m^t(q_j) = \emptyset$ (it is a straightforward consequence of the definition that such $t$ exists). 
Clearly  $\abs{Z_j} \geq  \abs{m^{t}(q_j)}$. 
Consider any $d\in m^{t-1}(q_j) \cap  R_j$ and note there is $l<j$ such that $d\in Z_l\subseteq m^{t_l}(q_l)$ with $t-1 < t_l$. 
By Claim~\ref{claim:d1} it means that $\abs{m^{t}(q_j)}\geq \abs{m^{t_l}(q_l)}$, thus $\abs{Z_j} \geq \abs{Z_l}$. 
Straightforward induction (with Case 1 as basis) gives $\abs{Z_j}\geq \mu-1$.
\end{proof}

We are ready to prove the proposition. 
Fix any optimal (maximum) matching in graph $G$ and let $F\subseteq D$ be the set of all elements in $D$ outside the  matching. 
Consider an enumeration $(y_1,\ldots, y_k)$ of  $\mathbf Y$ such that, for  $x_i=\abs{m(y_i)}-1$, we have $ x_1\geq x_2\geq \ldots\geq x_k\geq 0$. 
Let $x'=\abs{X-F}$, $f'=\abs{F-X}$ and $x_0 = x'-f'$.
Observe that $\abs{F} = f'+\abs{X}-x'$. 
It implies
\begin{align}\label{eq:size:D-X}
 \abs{D-X}=\abs{D}-\abs{X}= n + \abs{F}-\abs{X} = n-x_0.
\end{align}

To show that $(k, (x_0,x_1, \ldots,x_k))$ satisfy~\eqref{ineq:main-res}, fix $1\leq i\leq k$ and apply Claim~\ref{claim:main-ineq} for $Y=\{y_1,\ldots,y_i\}$.  
Then $\abs{M}=x_1+\ldots+x_i+i$ and  $\mu = x_i+1$. 
Recall that in the chosen optimal matching each vertex in $D-F$ has a unique match in $U$.
Therefore, $\abs{Q} \geq \abs{M-F}+\abs{X-F} = \abs{M-(F-X)}+\abs{X-F}  \geq \abs{M}-f'+x'$ 
since $M\cap X = \emptyset$,
and then
$$(\abs{M}-i+x_0)(\mu-1)+\abs{M}\leq \abs{D-X} \buildrel \eqref{eq:size:D-X}\over = n-x_0,$$ 
which  can be  rewritten into
\[ (x_0+x_1+\ldots+x_i)\cdot(1+x_i)\leq n-i.\]
Let $Q'$ be the set of all vertices $q\in U$ for which $N(q)\cap X \neq\emptyset$. 
Next, we define $s(q)$ as the set of all strong available elements assigned to $m(q)$ in the line $2$ of the algorithm. 
Observe that $s(q_1)$ and $s(q_2)$ are disjoint for distinct $q_1,q_2\in U$. 
By Claim~\ref{claim:els-above-X} we get $\abs{s(q)}=\alpha$ for each $q\in Q'$. 
Thus, $\alpha\abs{Q'}\leq n-x_0$ since $\bigcup_{q\in Q'}s(q) \subseteq D-X$ and by~\eqref{eq:size:D-X}.
Also, since each element in $D-F$ has a unique match in $U$, we have $\abs{Q'}\geq \abs{X-F}=x'\geq x_0$.  Therefore
$ (1+\alpha)x_0 \leq n. $

To finish the proof recall that $\bigcup_{y\in \mathbf Y} m(y) = D-X$.
Thus, $x_1+\ldots+x_k+k=n-x_0$ and consequently the size of the matching constructed by the algorithm equals $k = n-(x_0+x_1+\ldots+x_k)$.
Also, since $k$ cannot be larger then $n$ we have $x_0+\ldots+ x_k \geq0$.
\end{proof}

Combining Proposition ~\ref{prop:spoiler_bound} and Proposition~\ref{prop:balanced_bound} we finally get Theorem~\ref{thm:optimal-bal}.

\section{Competitiveness of $\alpha$-\Alg{} algorithm}
\label{sect:comp}

Let $\bal(\alpha,n)$ be the worst (minimum) size of matching constructed by $\alpha$-\Alg{} in any $\alpha$-lazy matching game of size $n$.
Competitive ratio of $\alpha$-\Alg{} is defined as $\bal(\alpha) = \liminf_{n\to\infty} \bal(\alpha,n)/n$.
Propositions~\ref{prop:spoiler_bound} and~\ref{prop:balanced_bound} imply that in order to determine $\bal(\alpha,n)$ it is enough to find a pair $(k,(x_0,\ldots, x_k))$ satisfying~\eqref{ineq:main-res} which maximizes $\sum_{i=0}^k x_i$.
Moreover, by Proposition~\ref{prop:positiv:x0} we can assume that in the maximizing solution we have $x_0= \lfloor \frac{n}{1+\alpha} \rfloor$.
From now on we consider $x_0$ in system~\eqref{ineq:main-res} as fixed together with $n$ and $\alpha$.
Suppose that pair $(k,(x_1,\ldots, x_k))$ satisfies~\eqref{ineq:main-res}.
Note that, for $i\geq 1$, if $x_i= x_{i+1}$ then the $(i+1)$-th inequality of system~\eqref{ineq:main-res} implies the $i$-th inequality.
That suggests another representation of the solutions.
For a pair $(k,x)$ satisfying~\eqref{ineq:main-res} with $x=(x_1, \ldots, x_k)$, let $Y(x)=(y_1, \ldots, y_m)$ be such that $m=1+x_1$ and $y_j= |\{i>0: 1+x_i= j\}|$.
Then, for every $i$ for which $x_{i+1}\neq x_i$, inequality 
\[
   (x_0+x_1+\ldots+x_i)(1+x_i) \leq n-i
\]
can be rewritten into 
\[
	(x_0 + (m-1) \cdot y_m + \ldots +(t-1)\cdot y_t) \cdot t \leq n- (y_m + \ldots + y_t),
\]
where $t=x_i+1$.
By the above discussion sequence $(y_1, \ldots, y_m)$ belongs to the image of $Y$ whenever it satisfies the following system of inequalities 
\begin{equation}\label{cond:prob:B}
   \eqB_{n,m}(x_0):\qquad t\cdot x_0 + \sum_{i=t}^m (1+(i-1)t)\vr{y_i} \leq n \quad \text{ for $t=1, \ldots,m$}
\end{equation}
Moreover, since $y_i\geq0$ then $m$-th inequality in~\eqref{cond:prob:B} implies
\begin{equation}\label{cond:m}
 n-mx_0\geq0.
\end{equation}
On the other hand, having any $m>0$ satisfying~\eqref{cond:m} and any solution $(y_1,\ldots,y_m)$ of~\eqref{cond:prob:B} one can easily
find (using the definition of $y_j$'s) a solution $(k,x)$ of~\eqref{ineq:main-res} such that $x_0+x_1+\ldots + x_k = x_0 + \sum_{i=1}^m(i-1)y_i$.

The above considerations are summarized in the following
\begin{prop}
 The minimal size of the matching constructed by $\alpha$-\Alg{} in all $\alpha$-lazy matching games of size $n$ is equal to
 $$\bal(\alpha,n)= n -x_0- \sup\left\{\sum_{i=1}^m (i-1) y_i \right\},$$
 where $x_0=\lfloor n/(1+\alpha)\rfloor$ and supreme is taken over all integers $m>0$ such that $n-mx_0\geq0$ and over all vectors $(y_1,\ldots,y_m)$ of nonnegative integers that satisfy $\eqB_{n,m}(x_0)$.
\end{prop}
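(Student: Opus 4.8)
The plan is to assemble the proposition from the three propositions already proved, together with the change of variables worked out immediately before its statement. First I would combine Proposition~\ref{prop:spoiler_bound} and Proposition~\ref{prop:balanced_bound} to pin down $\bal(\alpha,n)$ exactly. Proposition~\ref{prop:spoiler_bound} shows that every solution $(k,x)$ of~\eqref{ineq:main-res} yields a \Br strategy forcing any \Sr, and in particular $\alpha$-\Alg{}, to at most $n-(x_0+\ldots+x_k)$ matched vertices; taking the solution that maximizes $x_0+\ldots+x_k$ gives $\bal(\alpha,n)\leq n-\max\{x_0+\ldots+x_k\}$. Conversely, applying Proposition~\ref{prop:balanced_bound} to the game on which $\alpha$-\Alg{} attains its worst value produces a solution with $x_0+\ldots+x_k=n-\bal(\alpha,n)$, whence $\bal(\alpha,n)\geq n-\max\{x_0+\ldots+x_k\}$. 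Together these give the clean identity $\bal(\alpha,n)=n-\max\{x_0+\ldots+x_k\}$, the maximum ranging over all $(k,x)$ satisfying~\eqref{ineq:main-res}; since the feasible set is finite this maximum is attained.

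Next I would invoke Proposition~\ref{prop:positiv:x0} to restrict the maximization to solutions with $x_0=\lfloor n/(1+\alpha)\rfloor$, which loses nothing because some maximizer already has this value of $x_0$. With $x_0$ fixed, maximizing $x_0+\ldots+x_k$ amounts to maximizing $\sum_{i=1}^k x_i$ over the weakly decreasing tail $(x_1,\ldots,x_k)$, so at this point $\bal(\alpha,n)=n-x_0-\max\bigl\{\sum_{i=1}^k x_i\bigr\}$.

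The substance of the argument is then the change of variables $Y$. I would verify the identity $\sum_{i=1}^k x_i=\sum_{j=1}^m(j-1)\vr{y_j}$ by grouping the indices $i$ according to the common value $1+x_i=j$, and check that the essential inequalities of~\eqref{ineq:main-res}---those at indices where $x_i\neq x_{i+1}$, the rest being redundant as already observed---rewrite exactly into the system~\eqref{cond:prob:B}, using $t=x_i+1$, $\sum_{j=1}^i x_j=\sum_{l=t}^m(l-1)\vr{y_l}$ and $i=\sum_{l=t}^m\vr{y_l}$. The point needing genuine care---and which I expect to be the main obstacle---is that $Y$ is a \emph{bijection} onto the feasible region: every nonnegative integer vector $(y_1,\ldots,y_m)$ satisfying~\eqref{cond:prob:B} must arise from some admissible $(k,x)$. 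One reconstructs $(x_1,\ldots,x_k)$ by listing, in weakly decreasing order, $y_j$ copies of $j-1$ for $j=m,\ldots,1$, and must confirm that this $x$ satisfies \emph{all} inequalities of~\eqref{ineq:main-res}, not only those at block boundaries, and handle the bookkeeping when $y_m=0$ so that trailing zeros neither violate the definition $m=1+x_1$ nor alter the supremum. The bound $n-mx_0\geq0$ of~\eqref{cond:m} is automatic, being the $t=m$ case of~\eqref{cond:prob:B} once $y_m\geq0$ is used. Substituting the resulting supremum over $(m,(\vr{y_j}))$ for $\max\{\sum_{i=1}^k x_i\}$ yields the stated formula for $\bal(\alpha,n)$.
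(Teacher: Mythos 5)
Your proposal is correct and follows essentially the same route as the paper, which likewise derives $\bal(\alpha,n)=n-\max\{x_0+\ldots+x_k\}$ from Propositions~\ref{prop:spoiler_bound} and~\ref{prop:balanced_bound}, fixes $x_0=\lfloor n/(1+\alpha)\rfloor$ via Proposition~\ref{prop:positiv:x0}, and then performs the same change of variables $Y$, observing that only the block-boundary inequalities of~\eqref{ineq:main-res} are essential and that~\eqref{cond:m} is the $t=m$ instance of~\eqref{cond:prob:B}; the converse reconstruction you flag as the delicate point is exactly what the paper compresses into ``one can easily find \ldots a solution $(k,x)$ of~\eqref{ineq:main-res}''. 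The one small loose end in your forward direction is that $Y(x)$ must satisfy \emph{all} inequalities of $\eqB_{n,m}(x_0)$, including those at indices $t$ with $\vr{y_t}=0$ which are not images of block boundaries, but these follow from the inequality at $t+1$ since $x_0\geq 0$ and $\vr{y_i}\geq 0$.
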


\subsection{LP formulation.}
In order to maximize $\sum_{i=1}^m (i-1) y_i$ we consider the following linear program. We present the primal and the dual formulation.
\begin{itemize}
 \item Primal:
 
 \begin{tabular}{rll}
  maximize    & $\sum_{j=1}^m (j-1)y_j$\\
  subject to  &  $\sum_{j=i}^m (1+(j-1)i)y_j \leq n- ix_0$, & for $i=1,\ldots,m$\\
  and         &  $y_j \geq 0$, & for $j=1,\ldots,m$
 \end{tabular}

 \item Dual:
 
 \begin{tabular}{rll}
  minimize    & $\sum_{i=1}^m (n-ix_0)z_i$\\
  subject to  &  $\sum_{i=1}^j (1+(j-1)i)z_i \geq j-1$, & for $j=1,\ldots,m$\\
  and         &  $z_i \geq 0$, & for $i=1,\ldots,m$
 \end{tabular}
\end{itemize}
Let $P_i(y) = \sum_{j=i}^m (1+(j-1)i)y_j$ and $D_j(z) = \sum_{i=1}^j (1+(j-1)i)z_i$. 
Observe that both systems $(P_i(y) = n-ix_0)_{i=1, \ldots, m}$ and $(D_j(z) = j-1)_{j=1, \ldots, m}$ are quadratic and have unique solutions. 
By the Complementary Slackness Theorem these solutions are optimal if both are nonnegative vectors.

To solve the first system observe that $P_{i+2}(y)+P_{i}(y) - 2P_{i+1}(y) = (1-i+i^2)y_{i}-(1+i)^2y_{i+1}$. 
This, together with initial values for $y_m$ and $y_{m-1}$ gives 
\begin{align}
 & y_i= y_{i+1} \frac{(i+1)^2}{1-i+i^2} \quad \text{ for $i=1, \ldots,m-2$},\label{solution:yi} \\
 & y_m  =  \frac{n-m \cdot x_0}{1+m(m-1)},\quad \quad y_{m-1}= \frac{x_0+(m-1)y_m}{1+(m-1)(m-2)}.\label{solution:ym}
\end{align}
By~\eqref{cond:m} $y_m\geq0$ and thus all $y_i$ are nonnegative.

For the second system observe that $D_{j+1}(z)+D_{j-1}(z)-2D_j(z) = (1+j+j^2)z_{j+1}-(1-j)^2z_j$. Again, with the initial values for $z_1$ and $z_2$ we get
\begin{align*}
 & z_{j+1} = z_j \frac{(j-1)^2}{1+j+j^2}\quad\text{ for $j=2,\ldots,m-1$},\\
 & z_1 = 0, \quad\quad z_2 = 1/3.
\end{align*}
Therefore $z_i\geq0$ and indeed the above vectors $y$ and $z$ are optimal solutions for the primal and the dual LP.

To calculate the target function $\sum_{j=1}^m (j-1)y_j$ consider new variable $y_0$ defined by the additional condition: $P_0(y) = n$.
Equation~\eqref{solution:yi} still works for $y_0$. 
Also, after combining $P_0(y)$ with $P_1(y)$ we get $ \sum_{j=1}^m (j-1)y_j = y_0-x_0$.
Finally, using~\eqref{solution:yi} and~\eqref{solution:ym} and after some rearrangement we find 
\begin{equation}\label{solution:y0}
 x_0 + \sum_{j=1}^m (j-1)y_j = y_0 = \frac{(m-1) n+x_0}{ m} \prod_{i=1}^{m-1} \frac{i+i^2}{1+i+i^2}.
\end{equation}

\subsection{Lower bound.}
The solution for LP relaxation of~\eqref{cond:prob:B} may not be integer therefore with the solution~\eqref{solution:y0} we have only the following bound
\[
 \bal(\alpha,n) \geq n-\sup\{y_0 : n-mx_0\geq0\}.
\]

Let $F(z,m)=\frac{(m-1)+z}{m} \prod_{i=1}^{m-1} \frac{i+i^2}{1+i+i^2}$. 
Then, for fixed $m$, we get $y_0= n \cdot F(x_0/n,m)$. 
Observe that function $F(z,m)$ is increasing with $m$, for $m\leq 1/z$ and $F(0,m)$ increases indefinitely with $m$.
Therefore for $x_0 = \lfloor n / (1+\alpha)\rfloor > 0$, we have
\begin{align}
	\bal(\alpha,n)/n &\geq 1-\max_{m\leq n/x_0}F(x_0/n,m)\nonumber\\
		        &= 1-F(x_0/n,\lfloor n/x_0 \rfloor ) \buildrel n\to\infty\over\longrightarrow 1 - F(1/(1+\alpha),1+\alpha),\label{eq:balLB}
\end{align}
and for $x_0 = 0$ (it happens when $\alpha\geq n)$ the bound is
\begin{align}\label{eq:inftyLB}
  \bal(\alpha,n)/n &\geq 1-\lim_{m\to\infty}F(0,m) =1-\prod_{i=1}^{\infty}\frac{i+i^2}{1+i+i^2} =
  1-\frac{\pi}{\cosh\tfrac{\sqrt{3}}{2}\pi}.
\end{align}

\subsection{Upper bound.}
Consider $m=\min(\lfloor n/x_0 \rfloor,\ln n )$ assuming $n/x_0$ is greater then $\ln n$ when $x_0=0$,
and let $(y_1, \ldots, y_m)$ be the optimal rational solution of $\Psi_{n,m}(x_0)$.
Let $v=(v_1, \ldots, v_m)$ be such that $v_i= \lfloor y_i \rfloor$.
Vector $v$ contains only nonnegative, integer entries and $v\leq y$. 
The shape of system $\Psi_{n,m}(x_0)$ guarantees that $v$ also satisfies $\Psi_{n,m}(x_0)$.
Finally we have
\[
	x_0 + \sum_{i=1}^m (i-1) v_i >  x_0 + \sum_{i=1}^m (i-1) y_i - \sum_{i=1}^{m} (i-1).
\]
By the definition of $m$ we know that $m = o(n)$.
Therefore by Proposition~\ref{prop:spoiler_bound} we get
\begin{align*}
	\bal(\alpha,n) &< n \cdot (1-F(x_0/n,m )) + m(m-1)/2
	                =n \cdot (1-F(x_0/n,m )) + o(n)
\end{align*}
Hence, for $x_0=\lfloor n / (1+\alpha) \rfloor > 0$ the bound is
\begin{align}\label{eq:balUB}
 \bal(\alpha,n)/n  < 1-F(x_0/n,\lfloor n/x_0 \rfloor ) + o(1) \buildrel n\to\infty\over\longrightarrow 1 - F(1/(1+\alpha),1+\alpha),
\end{align}
and for $x_0 = 0$ we have
\begin{align}\label{eq:inftyUB}
 \bal(\alpha,n)/n < 1 - F(0,\ln n) + o(1) \buildrel n\to\infty\over\longrightarrow 1-\prod_{i=1}^{\infty}\frac{i+i^2}{1+i+i^2} =
  1-\frac{\pi}{\cosh\tfrac{\sqrt{3}}{2}\pi}.
\end{align}

\subsection{Competitive ratio.}
When $\alpha$ is finite, by~\eqref{eq:balLB} and~\eqref{eq:balUB} the ratio of $\alpha$-\Alg{} is equal to  $\bal(\alpha)= 1 - F(\frac{1}{\alpha+1},\alpha + 1)$.
For the case $\alpha\to\infty$ note that $F(1/(1+\alpha),1+\alpha)\to {\pi}/{\cosh\tfrac{\sqrt{3}}{2}\pi}$.
Thus, by all~(\ref{eq:balLB}-\ref{eq:inftyUB}) we get $\bal(\alpha) = 1-{\pi}/{\cosh\tfrac{\sqrt{3}}{2}\pi}$.
This finally proves Theorem~\ref{thm:ratio}.

\section{Conclusion and remarks}\label{sec:remarks}

In the classical on-line matching problem, randomized approach has a big advantage over the deterministic one. 
This paper shows that the lazy method moves forward deterministic bounds. 
It is interesting to know what can be achieved with randomization of the lazy technique. 
\begin{problem}
 What is the competitive ratio of the randomized version of the lazy matching problem?
\end{problem}

The authors of~\cite{KalKir2000} consider a variant (called $b$-matching) where each server can realize up to $b$ tasks. 
Competitive ratio of their optimal algorithm approaches $1-\frac{1}{e}$ with $b\rightarrow \infty$, which is a barrier for any randomized matching algorithm (see~\cite{KVV90}). 
We expect that the lazy method  is capable of breaking $1-\frac{1}{e}$ limit in case of $b$-matching.
\begin{problem}
 What is the competitive ratio of the lazy $b$-matching problem?
\end{problem}

\bibliographystyle{plain} \bibliography{adaptm}

\end{document}